\newtheorem{thm}{Theorem}
\newtheorem{cor}{Corollary}
\newtheorem{pro}{Proposition}
\theoremstyle{remark}
\newtheorem{rem}{Remark}
\theoremstyle{definition}
\newcommand{\CASE}[1]{\STATE \textbf{case} #1\textbf{:} \begin{ALC@g}}
	\newcommand{\ENDCASE}{\end{ALC@g}}
\newcommand{\DEFAULT}{\STATE \textbf{default:} \begin{ALC@g}}
	\newcommand{\ENDDEFAULT}{\end{ALC@g}}
\newcommand{\DEFAULTLINE}[1]{\STATE \textbf{default:} }
\newcounter{MYtempeqncnt}
\begin{document}

\title{
Optimal Order of Encoding for Gaussian MIMO Multi-Receiver Wiretap Channel
}

\author{%
	\IEEEauthorblockN{Yue Qi and Mojtaba Vaezi}
	\IEEEauthorblockA{
		Department of Electrical and Computer Engineering, 		Villanova University, Villanova, PA 19085, USA\\
		Email:  \{yqi, mvaezi\}@villanova.edu}
	
}

\maketitle

\begin{abstract}
The Gaussian multiple-input multiple-output (MIMO) multi-receiver wiretap channel is studied in this paper. The base station broadcasts confidential messages to $K$  intended users while keeping the messages secret from an eavesdropper. The capacity of this channel has already been characterized by applying dirty-paper coding and stochastic encoding. However, $K$ factorial  encoding orders may need to be enumerated for that, which makes the
problem intractable. We prove that there exists one optimal encoding order and reduced the  $K$ factorial times to a one-time encoding. The optimal encoding order is proved by forming a secrecy weighted sum rate (WSR) maximization problem.  The optimal order is the same as that for the MIMO broadcast channel  without
secrecy constraint, that is, the weight of users' rate in the WSR
maximization problem determines the optimal encoding order.
Numerical results verify the optimal encoding order. 
\end{abstract}

\IEEEpeerreviewmaketitle

\section{Introduction}

The \textit{multi-receiver wiretap channel}, depicted in Fig.~\ref{fig:system}, is a channel model in which a transmitter wants to transmit messages of $K$ legitimate users while keeping them confidential from an external eavesdropper.  This model, which is an extension of the well-known wiretap channel \cite{wyner1975wire}, is also known as secure broadcasting \cite{ekrem2012degraded} and  wiretap broadcast channel  \cite{benammar2015secrecy}.   
 In the multiple-input multiple-output (MIMO) multi-receiver wiretap channel each node can have   an arbitrary number of antennas  \cite{ekrem2011secrecy}. 
  The secrecy capacity region of the two-receiver channel was characterized in \cite{liu2010vector, bagherikaram2013secrecy}     
 in which secret dirty-paper coding (S-DPC)  is proved to be optimal \cite{bagherikaram2013secrecy}. 

Ekrem and Ulukus established the secrecy capacity region of the Gaussian MIMO $K$-receiver wiretap channel in \cite{ekrem2011secrecy}. Interestingly, a different order of encoding for users can result in a different achievable region. 
The capacity-achieving  rate region is then characterized by the convex closure of the union of achievable regions obtained by all  one-to-one permutations of encoding order.  Therefore,    to determine
the entire capacity region, all possible one-to-one permutations of users, i.e., $K!$  possible encoding orders, may need to be enumerated which makes the
problem intractable especially when the number of users becomes large.

In this paper, we prove that such  enumerations can be avoided and we find the optimal encoding order. To this end, we form a weighted sum-rate (WSR) maximization problem for the $K$-receiver wiretap channel,  convert the objective function to another equivalent function, and prove the optimal order for the new optimization problem. 
Our work shows that the optimal encoding order for  secure broadcasting is the same
 as that for  insecure broadcasting, i.e., the MIMO-BC without 
 secrecy. More specifically, the descending weight ordering in the WSR
 maximization problem determines the optimal encoding order \cite{liu2008maximum}. The proof is non-trivial because, unlike the MIMO-BC, the  associated problem in the multi-receiver MIMO wiretap channel is non-convex and BC-MAC duality \cite{vishwanath2003duality} cannot be directly applied. 
 
 This finding reduces the complexity of evaluating the WSR problem by $K$ factorial. 
 When all weights are different, there is only one optimal order of encoding. However,  
 when some weights are equal, different encoding orders can give different rate tuples (corner points of the capacity region) while all of them result in the same WSR. 
 Particularly, encoding order is not important when sum-capacity is the concern, i.e., all weights are the same. Nonetheless, the order will determine which corner point of the capacity region  will be achieved.   
 
\begin{figure}[tb]
	\centering
	\includegraphics[width=0.43\textwidth]{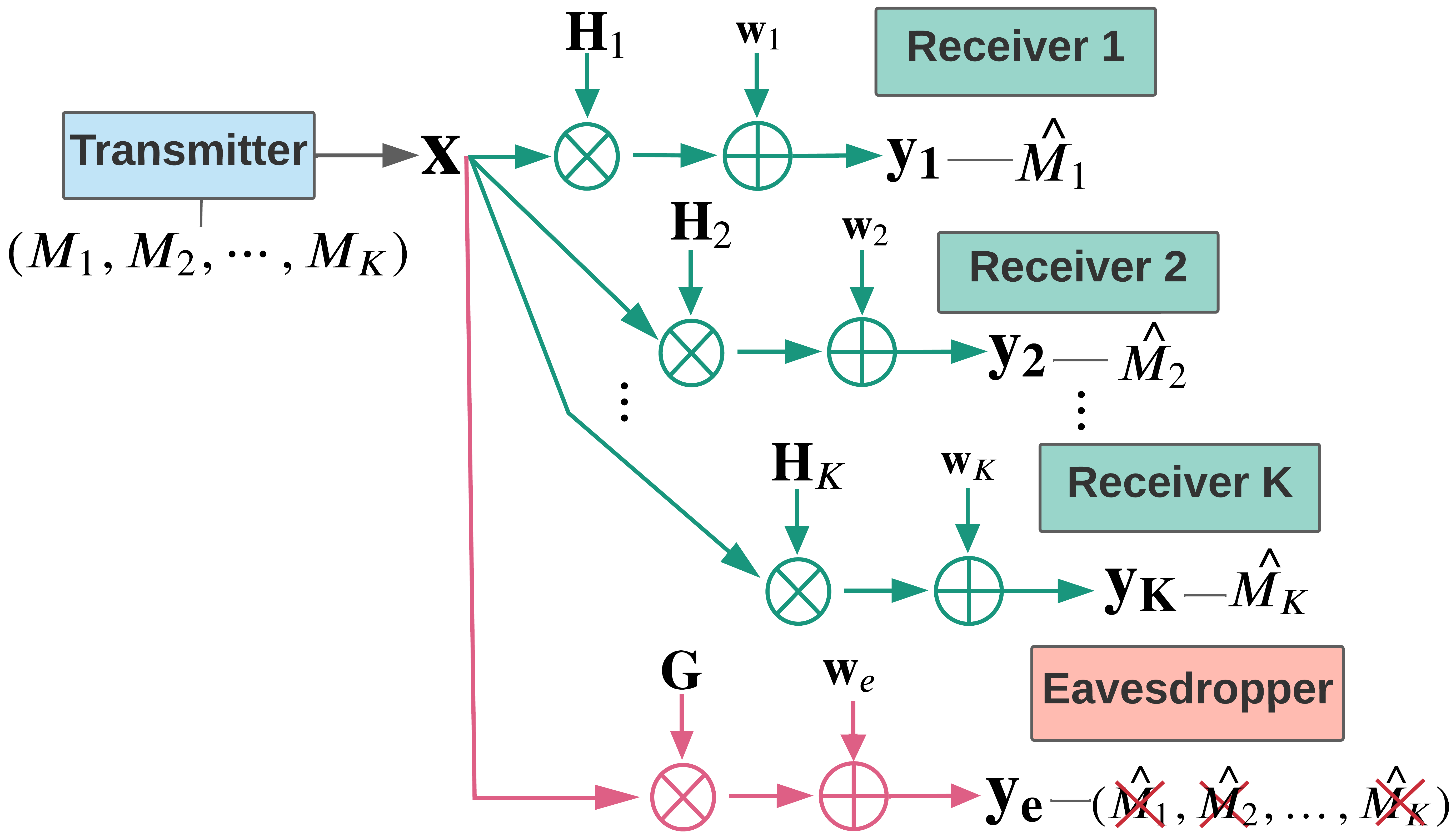}
	\caption{System model for the $K$-receiver wiretap channel.}
	\label{fig:system}
\end{figure}

 In addition to the order of encoding, we provide a numerical solution for finding the covariance matrices that achieve the capacity region. Previously,  the solution existed only for the secrecy sum-capacity of this channel \cite{park2016secrecy}, but not the whole capacity region. 
 Simulation results confirm the optimality of the proposed encoding order.
 

 The remainder of this paper is organized as follows. In Section~\ref{sec:II}, we describe the system model.
 We introduce the WSR maximization problem and give the proof for the optimal weight ordering in Section~\ref{sec:III}. We provide a numerical solution to efficiently find the covariance matrices in Section~\ref{sec:IV}. Then we provide examples to verify the results and conclude the paper in Section~\ref{sec:V} and Section~\ref{sec:VI}.
 
\textit{Notations:} $\rm{tr}(\cdot)$ and $(\cdot)^{\dagger}$ denote trace and Hermitian of matrices. $\mathbb{E}\{\cdot\}$ denotes expectation.  $\mathbf{Q} \succcurlyeq \mathbf{0} $ represents that $\mathbf{Q}$ is a positive semidefinite matrix. $\mathbf{I}_n$ is the identity matrix with size $n \times n$.  $\log$ represents  the natural logarithm and $|\mathbf{Q}|$ represents the determinant of $\mathbf{Q}$. 


\section{System Model} \label{sec:II}
Consider a $K$-receiver wiretap channel, as shown in 
Fig.~\ref{fig:system}. The BS transmits independent  confidential messages to $K$ legitimate receivers in the presence of one eavesdropper. Each confidential message $M_k$ for user $k$, $k= 1, 2, \dots, K$, should be kept secret  from the eavesdropper. There is no cooperation  among legitimate users.  The BS, user $k$, and  the eavesdropper are equipped with $n_t$, $n_k$, and $n_e$ antennas, respectively.   
The received signals at the legitimate users and at the eavesdropper are given by \vspace{-3pt}
\begin{subequations}\label{eq:signal model}
	\begin{align} 
		\mathbf{y}_k  &= \mathbf{H}_k  \mathbf{x} +\mathbf{w}_k, \; k= 1, 2, \dots, K, \\
		\mathbf{y}_e &= \mathbf{G}\mathbf{x} + \mathbf{w}_e,
	\end{align}
\end{subequations}
\noindent where the channels of $k$th legitimate user $\mathbf{H}_k$  and  eavesdropper $\mathbf{G}$ are ${n_k \times n_t}$ and ${n_e \times n_t}$ complex matrices  in which the elements of the channels  are drawn from  independent and identically distributed (i.i.d.) complex Gaussian distributions  and  $\mathbf{w}_k$, and  $\mathbf{w}_e$ are complex random vectors whose elements are i.i.d., zero-mean, unit-variance Gaussian. 
The $K$ independent signals are encoded by Gaussian codebooks and superimposed, and Gaussian signaling is optimal \cite{ekrem2011secrecy}. Thus, the transmit signal is $\mathbf{x}=\sum_{k=1}^{K}\mathbf{x}_k$ where $\mathbf{x}_k \sim \mathcal{CN}(0, \mathbf{Q}_k)$.  $\mathbf{Q}_k\succcurlyeq  \mathbf{0}$ represents the  covariance matrix of $k$th user.

The channel input is subject to a power constraint. 
While it is more common to define capacity regions
under a total power constraint, i.e., ${\rm tr}(\mathbb{E}\{\mathbf{x}\mathbf{x}^{\dagger}\}) \leq P$, the capacity region of this channel is first proved based on a covariance matrix constraint, i.e., $\mathbb{E}\{\mathbf{x}^{\dagger}\mathbf{x}\} \preccurlyeq \mathbf{S}$  \cite{ekrem2011secrecy, ekrem2012degraded}.
Once the capacity region is obtained under a covariance constraint $\mathcal{C}(\{\mathbf{H}_k\}_{k=1}^{K}, \mathbf{G}, \mathbf{S})$,  the capacity region
under more relaxed constraints on the channel inputs  $\mathcal{C}(\{\mathbf{H}_k\}_{k=1}^{K}, \mathbf{G}, P)$ can be obtained by union over compact sets of input covariance matrices as $\mathcal{C}(\{\mathbf{H}_k\}_{k=1}^{K}, \mathbf{G}, P) = \bigcup \limits_{\mathbf{S} \succcurlyeq \mathbf{0}, {\rm tr}(\mathbf{S}) \leq P} \mathcal{C}(\{\mathbf{H}_k\}_{k=1}^{K}, \mathbf{G}, \mathbf{S}).$ 
 In short, the the channel input should satisfy  \vspace{-3pt}
\begin{align} \label{eq:powercons} 
{\rm tr}(\mathbb{E}\{\mathbf{x}\mathbf{x}^{\dagger}\}) = \sum_{k=1}^{K}{\rm tr}(\mathbf{Q}_k) \leq {\rm tr}(\mathbf{S}) \leq P.
\end{align}

	
Let $\pi = [\pi_1, \pi_2, \dots, \pi_K]$ be a permutation function on the set $\{1, \dots, K\}$ and $\pi_k=j$ represent the $k$th element of that arrangement is $j$. By applying the DPC and stochastic
encoding, the achievable non-negative secrecy rate at user $\pi_k$ under the encoding order $\pi$ is given by \cite{ekrem2011secrecy, park2016secrecy}
\begin{align}\label{eq_KUrate}
	R_{\pi_k}=&\log 
	\frac{\left|
		\mathbf{I}_{n_k}+\mathbf{H}_{\pi_k}
		\left(\sum_{j=k}^{K}\mathbf{Q}_{\pi_j}\right)\mathbf{H}_{\pi_k}^\dagger
		\right|}{\left|
		\mathbf{I}_{n_k}+\mathbf{H}_{\pi_k}
		\left(\sum_{j=k+1}^{K}\mathbf{Q}_{\pi_j}\right)\mathbf{H}_{\pi_k}^\dagger
		\right|} \notag \\
	& \quad \quad - \log 
	\frac{\left|
		\mathbf{I}_{n_e}+\mathbf{G}
		\left(\sum_{j=k}^{K}\mathbf{Q}_{\pi_j}\right)\mathbf{G}^\dagger
		\right|}{\left|
		\mathbf{I}_{n_e}+\mathbf{G}
		\left(\sum_{j=k+1}^{K}\mathbf{Q}_{\pi_j}\right)\mathbf{G}^\dagger
		\right|},
\end{align}
in which $k=1, 2,  \dots, K$. Since $\log$ represents the natural logarithm, the unite of rates is {${\rm nats/sec/Hz}$}, in this paper. The secrecy capacity region under the total power constraint \eqref{eq:powercons} is characterized by the convex closure of the union of the DPC rate region over all possible one-to-one permutations $\pi$ \cite[Theorem 4]{ekrem2011secrecy}, where $\pi_k = j$ represents the $k$th position in permutation $\pi$ is user $j$. In general, $K!$ possible  encoding orders may need  enumerating to determine
the capacity region.   

It is worth mentioning that when there is only one legitimate receiver, the problem reduces to the MIMO wiretap
	channel and \eqref{eq_KUrate} becomes the capacity expression
	of the MIMO Gaussian wiretap channel,  established in  \cite{khisti2010secure, oggier2011secrecy, liu2009note}.
Also, various linear precoding schemes are designed for this case ($K=1$), including generalized singular value decomposition \cite{fakoorian2012optimal}, alternating optimization and water filling \cite{li2013transmit}, and rotation modeling \cite{vaezi2017journal, zhang2020rotation}. In this paper, we are interested in the $K$-receiver wiretap channel for $K \geq 2$, but our solutions apply to the case with  $K =1$. 
\section{Optimal Encoding Order}\label{sec:III}
The WSR maximization for the $K$-receiver wiretap channel under a total power constraint $P$ is formulated as 
\begin{align} \label{eq_WSR}
	\varphi(P)=&\max \limits_{\mathbf{Q}_{\pi_k} \succcurlyeq\mathbf{0}}
	\sum_{k=1}^K w_{\pi_k} R_{{\pi_k}}, \quad k=1,2, \dots, K \notag \\
	&{\rm s.t.} \quad \sum_{k=1}^K{\rm tr}(\mathbf{Q}_{\pi_k}) \leq P,
\end{align}
in which $R_{k}$ is given in \eqref{eq_KUrate}. The weights $w_{\pi_1}, w_{\pi_2}, \dots, w_{\pi_K}$ are non-negative values adding to one. Solving  problem  \eqref{eq_WSR} for {different permutations $\pi$ results in}  different inner bounds (achievable regions)  of the secrecy capacity region.
The whole capacity region is then determined by enumerating $K!$ possible encoding orders and finding their union, which makes the
problem intractable if the number of users is large.\footnote{Unlike this, in the MIMO-BC without security, there exists an optimal encoding order $\pi$ to maximize the WSR problem \cite{tse1998multiaccess, liu2008maximum}. The proof is determined by the well-known duality between BC and MAC \cite{vishwanath2003duality}.  
}  	

%

%
%

In this section, 
we prove that once the weights are fixed, the encoder order in \eqref{eq_WSR} is determined. 
Since the proof uses the BC-MAC duality \cite{vishwanath2003duality}, before providing the proof, we first briefly introduced this result  in the following. 

	\begin{pro} (BC-MAC Duality \cite{vishwanath2003duality})
	Suppose that users $1,  \dots, K$ are encoded sequentially. The  achievable rate of {user~$k$ for the MIMO-BC can be computed by}
		\begin{align}
			R^{\rm BC}_{k} = \log 
		\frac{\left|
			\mathbf{I}_{n_k}+\mathbf{H}_{k}
			\left(\sum_{j=k}^{K}\mathbf{Q}_{j}\right)\mathbf{H}_{k}^\dagger
			\right|}{\left|
			\mathbf{I}_{n_k}+\mathbf{H}_{k}
			\left(\sum_{j=k+1}^{K}\mathbf{Q}_{j}\right)\mathbf{H}_{k}^\dagger
			\right|}, \label{eq:BCsum}
		\end{align} 
	and, in {its} dual MIMO-MAC, the achievable  rate is given by
		\begin{align}\label{mac}
		R^{\rm MAC}_{k} = \log 
		\frac{\left|
			\mathbf{I}_{n_t}+	\sum_{j=1}^{k}\mathbf{H}^\dagger_{j}
		\mathbf{\Sigma}_{j}\mathbf{H}_{j}
			\right|}{\left|
			\mathbf{I}_{n_t}+\sum_{j=1}^{k-1}\mathbf{H}^\dagger_{j}
		\mathbf{\Sigma}_{j}\mathbf{H}_{j}
			\right|}.
		\end{align}
\noindent Given a set of BC covariance matrices $\mathbf{Q}_{k}$, $k=1, \dots, K$, there are  MAC covariance matrices	$\mathbf{\Sigma}_{k}$, $k=1, \dots, K$, such that 	
	$R^{\rm MAC}_{k}=R^{\rm BC}_{k}$, $\forall k$, and $\sum_{k=1}^{K}{\rm tr}(\mathbf{Q}_{k})=\sum_{k=1}^{K}{\rm tr}(\mathbf{\Sigma}_{k})$ via the BC-MAC transformation, and vice versa. Let us define 
	\begin{align}
\mathbf{C}_{k}& \triangleq \mathbf{I}_{n_k}+\mathbf{H}_{k}
	(\sum_{j=k+1}^{K}\mathbf{Q}_{j})\mathbf{H}_{k}^\dagger, \notag \\
\mathbf{D}_{k}&  \triangleq	\mathbf{I}_{n_t}+\sum_{j=1}^{k-1}\mathbf{H}^\dagger_{j}
	\mathbf{\Sigma}_{j}\mathbf{H}_{j}.
	\end{align} 
	 Then, the BC and MAC covariance matrices can be expressed as \cite{vishwanath2003duality}
	\begin{align}
	\mathbf{Q}_{k} &= \mathbf{D}^{-1/2}_{k}\mathbf{E}_{k}\mathbf{F}^{\dagger}_{k}\mathbf{C}^{1/2}_{k}	\mathbf{\Sigma}_{k} \mathbf{C}^{1/2}_{k}\mathbf{F}_{k}\mathbf{E}^{\dagger}_{k}\mathbf{D}^{-1/2}_{k},	 \notag \\
	\mathbf{\Sigma}_{k} & =	\mathbf{C}^{-1/2}_{k}\mathbf{E}_{k}\mathbf{F}^{\dagger}_{k}\mathbf{D}^{1/2}_{k}	\mathbf{Q}_{k} \mathbf{D}^{1/2}_{k}\mathbf{F}_{k}\mathbf{E}^{\dagger}_{k}\mathbf{C}^{-1/2}_{k},
\end{align}	
in which  $\mathbf{E}_{k}$ and $\mathbf{F}_{k}$ are obtained by singular value decomposition of  $\mathbf{D}_{k}^{-1/2}\mathbf{H}^{\dagger}_ {k}\mathbf{C}_{k}^{-1/2}=\mathbf{E}_{k}\mathbf{\Lambda}_{k}\mathbf{F}_{k}^{\dagger}$, where $\mathbf{\Lambda}_{k}$ is a
	square and diagonal matrix.	
	\end{pro}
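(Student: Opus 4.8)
The plan is to verify directly that the stated closed-form transformation does two things: (i) it makes each per-user rate \eqref{mac} in the dual MAC coincide with the corresponding BC rate \eqref{eq:BCsum}, and (ii) it preserves the total trace. I would run the construction recursively in the order $k=1,2,\dots,K$, which is self-consistent because $\mathbf{D}_k$ depends only on $\mathbf{\Sigma}_1,\dots,\mathbf{\Sigma}_{k-1}$ (already computed) while $\mathbf{C}_k$ depends only on $\mathbf{Q}_{k+1},\dots,\mathbf{Q}_K$ (given on the BC side); the reverse (MAC$\to$BC) direction is identical by symmetry, the two formulas being mutual inverses once $\mathbf{C}_k,\mathbf{D}_k$ are fixed.

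For the rate equality I would first whiten both expressions. Writing $\sum_{j=k}^{K}\mathbf{Q}_j=\mathbf{Q}_k+\sum_{j=k+1}^{K}\mathbf{Q}_j$ and factoring out $\mathbf{C}_k$ gives $R^{\rm BC}_k=\log|\mathbf{I}+\mathbf{C}_k^{-1/2}\mathbf{H}_k\mathbf{Q}_k\mathbf{H}_k^\dagger\mathbf{C}_k^{-1/2}|$, and likewise $R^{\rm MAC}_k=\log|\mathbf{I}+\mathbf{D}_k^{-1/2}\mathbf{H}_k^\dagger\mathbf{\Sigma}_k\mathbf{H}_k\mathbf{D}_k^{-1/2}|$. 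Introducing the effective channel $\tilde{\mathbf{H}}_k\triangleq\mathbf{C}_k^{-1/2}\mathbf{H}_k\mathbf{D}_k^{-1/2}=\mathbf{F}_k\mathbf{\Lambda}_k\mathbf{E}_k^\dagger$ (the Hermitian of the stated SVD) and the reparametrization $\tilde{\mathbf{Q}}_k\triangleq\mathbf{D}_k^{1/2}\mathbf{Q}_k\mathbf{D}_k^{1/2}$, $\tilde{\mathbf{\Sigma}}_k\triangleq\mathbf{C}_k^{1/2}\mathbf{\Sigma}_k\mathbf{C}_k^{1/2}$, the transformation collapses to the clean relation $\tilde{\mathbf{Q}}_k=\mathbf{E}_k\mathbf{F}_k^\dagger\tilde{\mathbf{\Sigma}}_k\mathbf{F}_k\mathbf{E}_k^\dagger$. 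Substituting this and using $\mathbf{E}_k^\dagger\mathbf{E}_k=\mathbf{I}$, $\mathbf{F}_k^\dagger\mathbf{F}_k=\mathbf{I}$ together with Sylvester's identity $|\mathbf{I}+\mathbf{A}\mathbf{B}|=|\mathbf{I}+\mathbf{B}\mathbf{A}|$, both $|\mathbf{I}+\tilde{\mathbf{H}}_k\tilde{\mathbf{Q}}_k\tilde{\mathbf{H}}_k^\dagger|$ and $|\mathbf{I}+\tilde{\mathbf{H}}_k^\dagger\tilde{\mathbf{\Sigma}}_k\tilde{\mathbf{H}}_k|$ reduce to the same quantity $|\mathbf{I}+\mathbf{F}_k^\dagger\tilde{\mathbf{\Sigma}}_k\mathbf{F}_k\mathbf{\Lambda}_k^2|$, which establishes $R^{\rm BC}_k=R^{\rm MAC}_k$ for every $k$.

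For the power constraint, the core is the per-user trace identity ${\rm tr}(\mathbf{Q}_k\mathbf{D}_k)={\rm tr}(\mathbf{\Sigma}_k\mathbf{C}_k)$, which I would read off from $\tilde{\mathbf{Q}}_k=\mathbf{E}_k\mathbf{F}_k^\dagger\tilde{\mathbf{\Sigma}}_k\mathbf{F}_k\mathbf{E}_k^\dagger$ using the cyclic property of the trace and the orthonormality of the SVD factors. Expanding each side with $\mathbf{D}_k=\mathbf{I}+\sum_{j<k}\mathbf{H}_j^\dagger\mathbf{\Sigma}_j\mathbf{H}_j$ and $\mathbf{C}_k=\mathbf{I}+\mathbf{H}_k(\sum_{j>k}\mathbf{Q}_j)\mathbf{H}_k^\dagger$ turns this into
\[ {\rm tr}(\mathbf{Q}_k)+\sum_{j<k}{\rm tr}(\mathbf{H}_j\mathbf{Q}_k\mathbf{H}_j^\dagger\mathbf{\Sigma}_j)={\rm tr}(\mathbf{\Sigma}_k)+\sum_{j>k}{\rm tr}(\mathbf{H}_k\mathbf{Q}_j\mathbf{H}_k^\dagger\mathbf{\Sigma}_k). \]
Summing over $k=1,\dots,K$, the interference double-sum on the left runs over all index pairs with $j<k$ and the one on the right over all pairs with $j>k$; after relabeling they become term-by-term identical and cancel, leaving $\sum_k{\rm tr}(\mathbf{Q}_k)=\sum_k{\rm tr}(\mathbf{\Sigma}_k)$.

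The main obstacle is this power-conservation step rather than the rate matching. The difficulty is structural: the BC treats users encoded \emph{later} as interference while the dual MAC treats users decoded \emph{earlier} as interference, so the per-user traces are \emph{not} individually equal and only the aggregate is preserved. Making the argument rigorous therefore hinges on (a) establishing the weighted identity ${\rm tr}(\mathbf{Q}_k\mathbf{D}_k)={\rm tr}(\mathbf{\Sigma}_k\mathbf{C}_k)$ exactly, which requires taking the energy of $\tilde{\mathbf{\Sigma}}_k$ to lie in the range of the SVD factors (any component in the null space of the effective channel wastes power without affecting the rate and is set to zero without loss of optimality), and (b) checking that the cross terms pair up precisely under the reversal of the encoding/decoding order. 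By contrast, the rate equality is essentially mechanical once the whitening and the SVD are in place.
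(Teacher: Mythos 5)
The paper never actually proves this proposition: it is imported as known background from \cite{vishwanath2003duality} and invoked later in the proof of Theorem~1, so there is no in-paper argument to compare against. Judged against the proof in that reference, your proposal is a correct reconstruction of essentially the same argument. Your step (i) is right and fully mechanical as you say: whitening gives $R^{\rm BC}_k=\log|\mathbf{I}+\mathbf{C}_k^{-1/2}\mathbf{H}_k\mathbf{Q}_k\mathbf{H}_k^{\dagger}\mathbf{C}_k^{-1/2}|$ and $R^{\rm MAC}_k=\log|\mathbf{I}+\mathbf{D}_k^{-1/2}\mathbf{H}_k^{\dagger}\mathbf{\Sigma}_k\mathbf{H}_k\mathbf{D}_k^{-1/2}|$, and with $\tilde{\mathbf{H}}_k=\mathbf{F}_k\mathbf{\Lambda}_k\mathbf{E}_k^{\dagger}$ and $\tilde{\mathbf{Q}}_k=\mathbf{E}_k\mathbf{F}_k^{\dagger}\tilde{\mathbf{\Sigma}}_k\mathbf{F}_k\mathbf{E}_k^{\dagger}$ both determinants collapse via Sylvester's identity to $|\mathbf{I}+\mathbf{F}_k^{\dagger}\tilde{\mathbf{\Sigma}}_k\mathbf{F}_k\mathbf{\Lambda}_k^{2}|$. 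Your step (ii) is also the argument used in the reference: the weighted per-user identity ${\rm tr}(\mathbf{Q}_k\mathbf{D}_k)={\rm tr}(\mathbf{\Sigma}_k\mathbf{C}_k)$, followed by summing over $k$ and cancelling the interference cross-terms after relabeling the pairs $j<k$ against $j>k$. You correctly flag the one genuine subtlety — the per-user traces are not individually equal, and the weighted identity needs ${\rm tr}(\mathbf{F}_k^{\dagger}\tilde{\mathbf{\Sigma}}_k\mathbf{F}_k)={\rm tr}(\tilde{\mathbf{\Sigma}}_k)$, which holds only after restricting $\tilde{\mathbf{\Sigma}}_k$ (resp.\ $\tilde{\mathbf{Q}}_k$) to the range of $\mathbf{F}_k$ (resp.\ $\mathbf{E}_k$); this restriction is without loss of optimality exactly for the reason you give, since any orthogonal component is invisible to every rate expression and only wastes power. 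The only gloss is your claim that the MAC$\to$BC direction is ``identical by symmetry'': it runs in \emph{descending} order $k=K,\dots,1$, because $\mathbf{C}_k$ depends on $\mathbf{Q}_j$ for $j>k$, whereas $\mathbf{D}_k$ is fully determined by the given $\mathbf{\Sigma}_j$; the fix is trivial but worth stating.

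A small point in your favor: your tilde-variable form of the transformation is the dimensionally consistent one. As printed in the proposition, the formula $\mathbf{\Sigma}_k=\mathbf{C}_k^{-1/2}\mathbf{E}_k\mathbf{F}_k^{\dagger}\mathbf{D}_k^{1/2}\mathbf{Q}_k\mathbf{D}_k^{1/2}\mathbf{F}_k\mathbf{E}_k^{\dagger}\mathbf{C}_k^{-1/2}$ only conforms when $n_t=n_k$, since $\mathbf{E}_k$ has $n_t$ rows while $\mathbf{C}_k^{-1/2}$ is $n_k\times n_k$; the roles of $\mathbf{E}_k$ and $\mathbf{F}_k$ must be interchanged there, which is precisely what your relation $\tilde{\mathbf{\Sigma}}_k=\mathbf{F}_k\mathbf{E}_k^{\dagger}\tilde{\mathbf{Q}}_k\mathbf{E}_k\mathbf{F}_k^{\dagger}$ does.
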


	\begin{thm} \label{lemma2}
		The WSR problem in \eqref{eq_WSR} can be solved by the following equivalent optimization problem
		\begin{subequations} \label{theorem}
		\begin{align} \label{eq_WSR1}
		&\max \limits_{\mathbf{\Sigma}_{\pi_k} \succcurlyeq\mathbf{0}}
		\sum_{k=1}^K (w_{\pi_k} - w_{\pi_{k-1}}) \times \notag \\
		& \bigg(\log|\mathbf{I}_{n_t} + \sum_{j=k}^{K}\mathbf{H}^{\dagger}_ {\pi_j}\mathbf{\Sigma}_ {\pi_j}\mathbf{H}_ {\pi_j}| - \log|\mathbf{I}_{n_e} + \sum_{j=k}^{K}\mathbf{G}_{\pi_j}^{\dagger}\mathbf{\Sigma}_ {\pi_j}\mathbf{G}_{\pi_j}\bigg), 	 \\ \label{eq:equavalent}
		&{\rm s.t.} \quad \sum_{k=1}^K{\rm tr}(\mathbf{\Sigma}_{\pi_k}) \leq P, \quad k=1,2, \dots, K 
		\end{align}
		\end{subequations}
		 {where $w_{\pi_0}\triangleq0$ and $\mathbf{G}_{\pi_j}\triangleq \mathbf{C}_{\pi_j}^{1/2}\mathbf{F}_{\pi_j}\mathbf{E}^{\dagger}_{\pi_j}\mathbf{D}_{\pi_j}^{-1/2}\mathbf{G}^{\dagger}$. Further, the optimal} decoding order $\pi$ is a permutation of the set $\{1,  \dots, K\}$ such that the weights satisfy $w_{\pi_1} \leq w_{\pi_2} \leq  \dots \leq w_{\pi_K}$. The optimal encoding order is the reverse.
	\end{thm}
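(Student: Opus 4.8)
The plan is to prove the two assertions of the theorem separately: first the equivalence of \eqref{eq_WSR} and \eqref{theorem}, then the optimal ordering. For the equivalence I would split the objective of \eqref{eq_WSR} into a legitimate part (the first log-ratio in \eqref{eq_KUrate}, summed with the weights) and an eavesdropper part (the second log-ratio). The eavesdropper sees the \emph{same} channel $\mathbf{G}$ for every user, so writing $\gamma(\mathbf{X})\triangleq\log|\mathbf{I}_{n_e}+\mathbf{G}\mathbf{X}\mathbf{G}^\dagger|$ the eavesdropper contribution is $\sum_{k=1}^{K}w_{\pi_k}\big[\gamma(\sum_{j=k}^{K}\mathbf{Q}_{\pi_j})-\gamma(\sum_{j=k+1}^{K}\mathbf{Q}_{\pi_j})\big]$, which telescopes: a summation by parts (Abel summation) with $w_{\pi_0}\triangleq0$ and the vanishing boundary term $\gamma(\mathbf{0})=0$ collapses it to $\sum_{k=1}^{K}(w_{\pi_k}-w_{\pi_{k-1}})\gamma(\sum_{j=k}^{K}\mathbf{Q}_{\pi_j})$. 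The legitimate part does not telescope directly because each user has its own channel $\mathbf{H}_{\pi_k}$; this is where I would invoke the BC--MAC duality to replace every legitimate rate by the equal dual-MAC rate at the same total power, after which the MAC rates, sharing a common accumulated matrix, telescope under the same Abel summation into $\sum_{k=1}^{K}(w_{\pi_k}-w_{\pi_{k-1}})\log|\mathbf{I}_{n_t}+\sum_{j=k}^{K}\mathbf{H}_{\pi_j}^\dagger\mathbf{\Sigma}_{\pi_j}\mathbf{H}_{\pi_j}|$.

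To finish the equivalence I would move the eavesdropper term into the MAC variables using the per-user identity $\mathbf{G}\mathbf{Q}_{\pi_j}\mathbf{G}^\dagger=\mathbf{G}_{\pi_j}^\dagger\mathbf{\Sigma}_{\pi_j}\mathbf{G}_{\pi_j}$, which follows by substituting the duality transform for $\mathbf{Q}_{\pi_j}$ into the definition $\mathbf{G}_{\pi_j}\triangleq\mathbf{C}_{\pi_j}^{1/2}\mathbf{F}_{\pi_j}\mathbf{E}_{\pi_j}^\dagger\mathbf{D}_{\pi_j}^{-1/2}\mathbf{G}^\dagger$; summing over the nested index set then gives $\gamma(\sum_{j=k}^{K}\mathbf{Q}_{\pi_j})=\log|\mathbf{I}_{n_e}+\sum_{j=k}^{K}\mathbf{G}_{\pi_j}^\dagger\mathbf{\Sigma}_{\pi_j}\mathbf{G}_{\pi_j}|$. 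Subtracting the two telescoped sums reproduces the objective \eqref{eq_WSR1} verbatim, and the constraint \eqref{eq:equavalent} follows from the trace-preserving property $\sum_k{\rm tr}(\mathbf{Q}_{\pi_k})=\sum_k{\rm tr}(\mathbf{\Sigma}_{\pi_k})$ of the duality. The one bookkeeping point needing care is that the reverse partial sums $\sum_{j=k}^{K}$ in \eqref{eq_WSR1} require the MAC decoding order to be aligned with the BC encoding order, i.e.\ the permutation appearing in \eqref{eq_WSR1} must be read as the decoding order.

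For the ordering I would use an adjacent-transposition (exchange) argument on the equivalent problem. Swapping two neighbouring positions $k,k+1$ (users $a,b$) leaves every nested sum $\sum_{j=\ell}^{K}$ with $\ell\notin\{k,k+1\}$ unchanged, so in the difference of the two objectives everything cancels except the two swapped users' terms, localizing the comparison to a two-user block over a fixed background covariance $\mathbf{N}\triangleq\sum_{j>k+1}\mathbf{Q}_{\pi_j}$. For this block the legitimate contribution is a two-user MIMO-BC weighted sum rate, whose maximized value is ordered by the known insecure-BC result \cite{liu2008maximum} (larger weight encoded first) obtained through the duality; the eavesdropper contribution to the objective difference equals $(w_a-w_b)\big[\gamma(\mathbf{Q}_a+\mathbf{N})+\gamma(\mathbf{Q}_b+\mathbf{N})-\gamma(\mathbf{Q}_a+\mathbf{Q}_b+\mathbf{N})-\gamma(\mathbf{N})\big]$, whose bracket is nonnegative by the diminishing-returns (submodularity) property of $\log\det$. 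Both effects favour placing the larger-weight user at the earlier, encoded-first position, so each swap toward the descending-weight order is nondecreasing; since any permutation is sorted by finitely many adjacent swaps, the descending-weight encoding order (equivalently $w_{\pi_1}\le\cdots\le w_{\pi_K}$ for decoding) is optimal.

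I expect the ordering step, not the equivalence, to be the main obstacle, and for precisely the reason the paper stresses: the secrecy objective is a \emph{difference} of concave $\log\det$ terms, so the polymatroid/greedy optimality used for the insecure MIMO-BC does not apply to the whole objective at once. The resolution I would pursue is the localization above, which lets the legitimate block inherit the known BC ordering while the submodularity inequality controls the eavesdropper block; the delicate points are verifying that these two contributions push in the same direction and that the exchange argument remains valid at the optimum despite the non-convexity (in particular that re-optimizing the block covariances does not reverse the conclusion), both of which become tractable once the problem is written in the concave-term MAC form \eqref{eq_WSR1}.
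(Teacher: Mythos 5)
Your equivalence argument is, step for step, the paper's own proof of that half: the Abel/telescoping of the eavesdropper part (the paper's step $(b)$), the conversion of the legitimate part to MAC form via \cite{liu2008maximum} (step $(a)$), and the per-user identity $\mathbf{G}\mathbf{Q}_{\pi_j}\mathbf{G}^\dagger=\mathbf{G}_{\pi_j}^\dagger\mathbf{\Sigma}_{\pi_j}\mathbf{G}_{\pi_j}$ giving step $(c)$ (including your correct remark that the permutation must be re-read as the dual/decoding order). That part is fine.

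The ordering half, however, has a genuine gap: your adjacent-swap argument adds two inequalities that hold at \emph{different} covariance points. The eavesdropper bound, $\gamma(\mathbf{Q}_a+\mathbf{N})+\gamma(\mathbf{Q}_b+\mathbf{N})-\gamma(\mathbf{Q}_a+\mathbf{Q}_b+\mathbf{N})-\gamma(\mathbf{N})\ge 0$, is pointwise — it compares the two orders at the \emph{same} $\{\mathbf{Q}_k\}$. The legitimate-part comparison you import from \cite{liu2008maximum} is not pointwise: it is a statement about \emph{maximized} values, obtained only after re-optimizing the covariances (equivalently, after the BC--MAC transformation, which changes the $\mathbf{Q}$'s). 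At fixed covariances the two DPC orders give weighted sums that are not ordered by the weights at all; indeed for $w_a=w_b$ a pointwise claim would force equality, yet scalar examples (e.g., $h_a=1$, $h_b=2$, $q_a=q_b=1$, $\mathbf{N}=0$) give different values for the two orders. So if $\mathbf{Q}^*$ is optimal for order $(b,a)$, you can bound the eavesdropper terms at $\mathbf{Q}^*$ but you have no bound on the legitimate terms there; and at the re-optimized point where the legitimate comparison does hold, the submodularity bound no longer applies. The two effects cannot be added. Performing the swap inside the MAC form \eqref{eq_WSR1} instead does not rescue this: there the pointwise exchange would require submodularity of the \emph{difference} set function $f(S)=\log|\mathbf{I}+\sum_{j\in S}\mathbf{H}_j^\dagger\mathbf{\Sigma}_j\mathbf{H}_j|-\log|\mathbf{I}+\sum_{j\in S}\mathbf{G}_j^\dagger\mathbf{\Sigma}_j\mathbf{G}_j|$, and a difference of two submodular functions is not submodular in general (moreover the $\mathbf{G}_j$'s themselves depend on the order through $\mathbf{C}_j,\mathbf{D}_j$).

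The paper closes exactly this hole with a different, global idea in place of the exchange: it proves that for \emph{any} order and \emph{any} feasible $\mathbf{\Sigma}$, each bracket in \eqref{lasteq} is non-negative, via the chain $\log|\mathbf{I}+\sum_{j\ge k}\mathbf{H}_{\pi_j}^\dagger\mathbf{\Sigma}_{\pi_j}\mathbf{H}_{\pi_j}|\ \ge\ \sum_{j\ge k}R^{\rm MAC}_{\pi_j}=\sum_{j\ge k}R^{\rm BC}_{\pi_j}\ \ge\ \log|\mathbf{I}+\mathbf{G}(\sum_{j\ge k}\mathbf{Q}_{\pi_j})\mathbf{G}^\dagger|$, where the last inequality \eqref{bcinequal} is nothing but the non-negativity of the secrecy rates \eqref{eq_KUrate} telescoped over the suffix, and the equalities come from Proposition~1. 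Once every bracket is known to be non-negative, the objective has the same sign structure as the insecure problem and the weight-sorted order follows as in \cite{liu2008maximum}. If you want to keep your exchange framework, you need some global property of this kind to substitute for the missing submodularity of $f$; the non-negativity of the secrecy brackets is precisely that substitute, and it is the one idea your proposal does not contain.
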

\begin{proof}
		First,  by expanding and rewriting the WSR maximization problem \eqref{eq_WSR}, we have
		\begin{subequations}
		\begin{align}
		&\quad \max \limits_{\mathbf{Q}_{\pi_k} \succcurlyeq\mathbf{0}}	\sum_{k=1}^K w_{\pi_k} R_{\pi_k}  \notag \\
		&=	\max \limits_{\mathbf{Q}_{\pi_k} \succcurlyeq\mathbf{0}}\bigg( \sum_{k=1}^K  w_{\pi_k} \log 
		 \frac{\left|
			\mathbf{I}_{n_k}+\mathbf{H}_{\pi_k}
			\left(\sum_{j=k}^{K}\mathbf{Q}_{\pi_j}\right)\mathbf{H}_{\pi_k}^\dagger
			\right|}{\left|
			\mathbf{I}_{n_k}+\mathbf{H}_{\pi_k}
			\left(\sum_{j=k+1}^{K}\mathbf{Q}_{\pi_j}\right)\mathbf{H}_{\pi_k}^\dagger
			\right|}  \notag \\
		&   - \sum_{k=1}^K w_{\pi_k} \log 
		\frac{\left|
			\mathbf{I}_{n_e}+\mathbf{G}
			\left(\sum_{j=k}^{K}\mathbf{Q}_{\pi_j}\right)\mathbf{G}^\dagger
			\right|}{\left|
			\mathbf{I}_{n_e}+\mathbf{G}
			\left(\sum_{j=k+1}^{K}\mathbf{Q}_{\pi_j}\right)\mathbf{G}^\dagger
			\right|}\bigg)  \\
		& \stackrel{(a)}{=} \max \limits_{\mathbf{\Sigma}_{\pi_k} \succcurlyeq\mathbf{0}} \sum_{k=1}^K (w_{\pi_k} - w_{\pi_{k-1}})\log|\mathbf{I}_{n_t} + \sum_{j=k}^{K}\mathbf{H}^{\dagger}_ {\pi_j}\mathbf{\Sigma}_{\pi_j}\mathbf{H}_ {\pi_j}| \notag \\ 
		&   + \max \limits_{\mathbf{Q}_{\pi_k} \succcurlyeq\mathbf{0}} \bigg( - \sum_{k=1}^K w_{\pi_k} \log 
		\frac{\left|
			\mathbf{I}_{n_e}+\mathbf{G}
			\left(\sum_{j=k}^{K}\mathbf{Q}_{\pi_j}\right)\mathbf{G}^\dagger
			\right|}{\left|
			\mathbf{I}_{n_e}+\mathbf{G}
			\left(\sum_{j=k+1}^{K}\mathbf{Q}_{\pi_j}\right)\mathbf{G}^\dagger
			\right|}\bigg)  \label{bc_wsr}\\
			&\stackrel{(b)}{=} \max \limits_{\mathbf{\Sigma}_{\pi_k} \succcurlyeq\mathbf{0}} \sum_{k=1}^K (w_{\pi_k} - w_{\pi_{k-1}}) \log|\mathbf{I}_{n_t} + \sum_{j=k}^{K}\mathbf{H}^{\dagger}_ {\pi_j}\mathbf{\Sigma}_ {\pi_j}\mathbf{H}_ {\pi_j}| \notag \\
			& + \max \limits_{\mathbf{Q}_{\pi_k} \succcurlyeq\mathbf{0}}\bigg( -\sum_{k=1}^K(w_{\pi_k} - w_{\pi_{k-1}}) \log|\mathbf{I}_{n_e} + \mathbf{G}\sum_{j=k}^{K}\mathbf{Q}_{\pi_j}\mathbf{G}^{\dagger}|\bigg)  \label{mac_wsr} \\
			&\stackrel{(c)}{=} 	\max \limits_{\mathbf{\Sigma}_{\pi_k} \succcurlyeq\mathbf{0}} \sum_{k=1}^K (w_{\pi_k} - w_{\pi_{k-1}}) \times  \notag  \\ 
			& \hspace{-2.5mm} \bigg(\log|\mathbf{I}_{n_t} + \sum_{j=k}^{K}\mathbf{H}^{\dagger}_ {\pi_j}\mathbf{\Sigma}_ {\pi_j}\mathbf{H}_ {\pi_j}| - \log|\mathbf{I}_{n_e} + \sum_{j=k}^{K}\mathbf{G}_{\pi_j}^{\dagger}\mathbf{\Sigma}_ {\pi_j}\mathbf{G}_{\pi_j}|\bigg). \label{lasteq}
		\end{align}
		\end{subequations}
	Here, $(a)$ holds due to \cite[Theorem 1]{liu2008maximum}; $(b)$ holds by expanding the second term in \eqref{bc_wsr} and 
	rearranging the terms to obtain the second term in \eqref{mac_wsr}; and, $(c)$ is obtained by applying the BC-MAC duality and $\mathbf{G}_{\pi_j}\triangleq \mathbf{C}_{\pi_j}^{1/2}\mathbf{F}_{\pi_j}\mathbf{E}^{\dagger}_{\pi_j}\mathbf{D}_{\pi_j}^{-1/2}\mathbf{G}^{\dagger}$. 

 So far, we have proved that \eqref{theorem} is equivalent to \eqref{eq_WSR}. We next prove the optimal order of encoding.  If we prove that
 the difference of the two logs in \eqref{lasteq} is non-negative for any $\Sigma_{\pi_k}$,   the optimal order will be determined \cite{liu2008maximum}.
 \footnote{We assume that not all the weights are identical, otherwise, the WSR maximization reduces to a scaled sum-rate problem  whose optimal solution is  obtained	at any of the $K!$ corner points of the capacity region \cite{park2016secrecy}.} 
	
To this end, knowing  that $R_{\pi_k}\ge 0$ for every $\pi_k$, from \eqref{eq_KUrate} and  \eqref{eq:BCsum} we can check that 
\begin{subequations}
	\begin{align}
	&	R^{\rm BC}_{\pi_K} \geq
	\log{|
		\mathbf{I}_{n_e}+\mathbf{G}
		\mathbf{Q}_{\pi_K}\mathbf{G}^\dagger
		|},  \\
		& \quad \quad  \quad \quad \quad \quad \vdots \notag \\
		& R^{\rm BC}_{\pi_k} +  \dots + R^{\rm BC}_{\pi_K}  \geq
		\log{\bigg|\mathbf{I}_{n_e}+\mathbf{G}
			(\sum_{j=k}^{K}\mathbf{Q}_{\pi_j})\mathbf{G}^\dagger \bigg|}. \label{bcinequal} 
	\end{align}
\end{subequations} 
Next, we have \begin{align}
\log|\mathbf{I} + \sum_{j=k}^{K}\mathbf{H}^{\dagger}_{\pi_j}\mathbf{\Sigma}_ {\pi_j}\mathbf{H}_{\pi_j}| & \stackrel{(a)}{\ge}   \log 
\frac{\left|
	\mathbf{I}+	\sum_{j=k}^{K}\mathbf{H}^\dagger_{\pi_j}
	\mathbf{\Sigma}_{\pi_j}\mathbf{H}_ {\pi_j}
	\right|}{\left|
	\mathbf{I}+\sum_{j=1}^{k-1}\mathbf{H}^\dagger_{\pi_j}
	\mathbf{\Sigma}_{\pi_j}\mathbf{H}_{\pi_j}
	\right|} \notag \\
&  \stackrel{(b)}{=} \sum_{j=k}^{K} R^{\rm MAC}_{\pi_j} 
\stackrel{(c)}{=} \sum_{j=k}^{K} R^{\rm BC}_{\pi_j}  \notag \\
& \stackrel{(d)}{\ge}
\log{|	\mathbf{I}+\mathbf{G}
	(\sum_{j=k}^{K}\mathbf{Q}_{\pi_j})\mathbf{G}^\dagger|}  \notag \\
&   \stackrel{(e)}{=}
\log|\mathbf{I} + \sum_{j=k}^{K}\mathbf{G}_{\pi_j}^{\dagger}  \mathbf{\Sigma}_{\pi_j}\mathbf{G}_{\pi_j}|
\notag
\end{align} 
where the $(a)$ follows because  $\sum_{j=1}^{k-1}\mathbf{H}^{\dagger}_{\pi_j}\mathbf{\Sigma}_{\pi_j} \mathbf{H}_{\pi_j} \succcurlyeq \mathbf{0}$ which results in $\left|
\mathbf{I}+\sum_{j=1}^{k-1}\mathbf{H}^\dagger_{\pi_j}
\mathbf{\Sigma}_{\pi_j}\mathbf{H}_{\pi_j}
\right| \ge 1$, $(b)$ and $(c)$ follow from Proposition~1, $(d)$ holds because of \eqref{bcinequal}, and $(e)$ follows with the same argument we had moving from  \eqref{mac_wsr} to  \eqref{lasteq}.

This proves that in \eqref{lasteq}, the term inside the parentheses is non-negative. Then, to assure a non-negative WSR, we must have $w_{\pi_k} - w_{\pi_{k-1}} \geq 0$. Hence, $w_{\pi_1} \leq  w_{\pi_{2}} \leq  \dots \leq w_{\pi_{K}}$, and the decoding order is a permutation $[\pi_{1}, \pi_{2}, \dots, \pi_{K}]$ satisfying $w_{\pi_1} \leq  w_{\pi_{2}} \leq  \dots \leq w_{\pi_{K}}$. 
This completes the proof.
\end{proof}

\begin{rem}
The secrecy capacity region of the general Gaussian MIMO multi-receiver wiretap channel is characterized by the convex closure of the union of the DPC rate region over all possible one-to-one permutations $\pi$ \cite[Theorem 4]{ekrem2011secrecy}. Based on Theorem~\ref{lemma2}, only one encoding order is enough to achieve {each point on the boundary of} the secrecy capacity region, and $K!$  encoding orders is reduced to one order.  The optimal order is the same as that for the MIMO-BC, that is, a user	{with a higher} weight in the WSR should be encoded earlier.
\end{rem}

\begin{cor}
	When {all} weights are equal, the secrecy sum-rate is obtained. Different encoding orders may result in different rate tuples $(R_{\pi_1}, R_{\pi_2}, \dots, R_{\pi_K})$, but all encoding orders yield the same secrecy sum-rate. 
\end{cor}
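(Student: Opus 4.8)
The plan is to specialize Theorem~\ref{lemma2} to $w_{\pi_1}=\dots=w_{\pi_K}=1/K$ and read the conclusion off the equivalent problem~\eqref{theorem}. With identical weights the objective of \eqref{eq_WSR} becomes $\sum_{k=1}^{K}w_{\pi_k}R_{\pi_k}=\tfrac1K\sum_{k=1}^{K}R_{\pi_k}$, so maximizing the WSR is the same as maximizing the secrecy sum-rate up to the scalar $1/K$; this is the first assertion. It then remains to establish (i) that the optimal value is the same for every permutation, and (ii) that the optimizing rate tuple generally is not.

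For (i) I would use the collapse of the coefficients $w_{\pi_k}-w_{\pi_{k-1}}$ that is already visible in the proof of Theorem~\ref{lemma2}: with equal weights they equal $1/K$ for $k=1$ and vanish for every $k\ge 2$ (recall $w_{\pi_0}\triangleq 0$). Hence only the $k=1$ term of \eqref{eq_WSR1} survives, and, rewriting the eavesdropper determinant through the total transmit covariance $\mathbf{Q}=\sum_{j=1}^{K}\mathbf{Q}_{\pi_j}$ exactly as in the passage from \eqref{mac_wsr} to \eqref{lasteq}, the problem reduces to
\begin{align}
\varphi(P)=\tfrac1K\max_{\mathbf{\Sigma}_{\pi_k}\succcurlyeq\mathbf{0}}\Big(\log\big|\mathbf{I}_{n_t}+\textstyle\sum_{j=1}^{K}\mathbf{H}^{\dagger}_{\pi_j}\mathbf{\Sigma}_{\pi_j}\mathbf{H}_{\pi_j}\big|-\log\big|\mathbf{I}_{n_e}+\mathbf{G}\mathbf{Q}\mathbf{G}^{\dagger}\big|\Big). \notag
\end{align}
The first determinant is the dual-MAC sum-rate, which by Proposition~1 equals the legitimate MIMO-BC sum-rate; the second depends on the input only through the total $\mathbf{Q}$, equivalently the eavesdropper contributions in \eqref{eq_KUrate} telescope to $\log|\mathbf{I}_{n_e}+\mathbf{G}\mathbf{Q}\mathbf{G}^{\dagger}|$. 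Since the maximal legitimate sum-rate for a fixed total covariance is order-invariant and the eavesdropper penalty sees only $\mathbf{Q}$, the value $\varphi(P)$ does not depend on $\pi$.

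For (ii) I would inspect \eqref{eq_KUrate} directly: each $R_{\pi_k}$ is built from the nested partial sums $\sum_{j\ge k}\mathbf{Q}_{\pi_j}$ and $\sum_{j>k}\mathbf{Q}_{\pi_j}$, and these do change under a permutation of the order. Thus, although the sum is fixed, the per-user rates redistribute, so the achieved point moves among the $K!$ corner points lying on the common secrecy-sum-rate face of the region.

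The step I expect to be the main obstacle is making the order-invariance in (i) rigorous, because the BC--MAC transformation of Proposition~1, and hence the map from a fixed MAC set $\{\mathbf{\Sigma}_{\pi_k}\}$ to the total $\mathbf{Q}$, itself depends on $\pi$, so the two surviving determinants are not termwise symmetric. I would get around this by arguing at the level of the maximized value rather than for a fixed split: for each fixed total covariance $\mathbf{Q}$ the maximal legitimate sum-rate is the MIMO-BC sum capacity, which is order-independent, while the eavesdropper term is a function of $\mathbf{Q}$ alone, so optimizing over $\mathbf{Q}$ with ${\rm tr}(\mathbf{Q})\le P$ yields the same $\varphi(P)$ for every $\pi$. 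Some care is also needed to check that the per-user secrecy constraints $R_{\pi_k}\ge 0$ do not spoil this reduction, which holds because the sum-rate-optimal covariance stays feasible and achievable under every order.
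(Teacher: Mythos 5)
Your proposal is correct, and it is actually more of a proof than the paper itself offers: the paper states this corollary with no argument, treating it as immediate from Theorem~\ref{lemma2} together with the footnote in that theorem's proof, which disposes of the equal-weight case by citing \cite{park2016secrecy} (the WSR degenerates into a scaled sum-rate problem whose optimum is attained at any of the $K!$ corner points). Your derivation supplies the chain the paper outsources: the coefficients $w_{\pi_k}-w_{\pi_{k-1}}$ collapse so that only the $k=1$ term of \eqref{eq_WSR1} survives, the eavesdropper terms in \eqref{eq_KUrate} telescope to $\log|\mathbf{I}_{n_e}+\mathbf{G}\mathbf{Q}\mathbf{G}^{\dagger}|$ with $\mathbf{Q}=\sum_{k}\mathbf{Q}_{\pi_k}$, and order-invariance of the optimal value then reduces to order-invariance of the legitimate MIMO-BC sum rate maximized under a fixed total covariance. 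You also correctly spot the trap a naive reading of \eqref{eq_WSR1} invites: the effective channels $\mathbf{G}_{\pi_j}$ depend on $\pi$ and on all the covariances, so the surviving term is not termwise symmetric and the invariance must be argued at the level of optimal values, exactly as you do. Two refinements would make it airtight. First, the order-invariance of that inner maximization is \emph{not} Proposition~1, which only preserves the sum of traces; it needs the covariance-constrained form of BC--MAC duality, i.e., the sum-capacity machinery behind \cite{weingartens2006capacity}, and that is what you should cite there. Second, the passage between the constraints $\sum_k \mathbf{Q}_{\pi_k}=\mathbf{Q}$ and $\sum_k \mathbf{Q}_{\pi_k}\preccurlyeq\mathbf{Q}$ deserves one line: any covariance deficit can be assigned to the first-encoded user $\pi_1$, which by \eqref{eq:BCsum} raises its own rate while appearing in no other user's rate. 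The feasibility caveat you raise about $R_{\pi_k}\ge 0$ is real but is the same assumption the paper itself makes implicitly in the proof of Theorem~\ref{lemma2}, so it puts you on equal footing with the paper rather than behind it. In short, your route buys a self-contained proof at the cost of invoking a stronger duality result; the paper buys brevity by delegating precisely that step to \cite{park2016secrecy}.
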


\section{Solving the WSR Problem} \label{sec:IV}

\begin{figure*}[!b]
	\normalsize
	\setcounter{MYtempeqncnt}{\value{equation}}
	\setcounter{equation}{15}
	\begin{subequations}
		\begin{align}	
		\hline
		&f_k^{ccv}(\mathbf{Q}_k,\mathbf{Q}_{\bar{k}})=w_k 
		\log{|
			\mathbf{I}+({
				\mathbf{I}+\mathbf{H}_k
				\sum_{j=k+1}^{K}\mathbf{Q}_j\mathbf{H}_k^\dagger
			})^{-1}\mathbf{H}_k\mathbf{Q}_k\mathbf{H}_k^\dagger
			|}+\sum_{j=1}^{k-1}w_j\log\big|
		\mathbf{I}+\mathbf{G}
		\sum_{i=j+1}^{K}\mathbf{Q}_i\mathbf{G}^\dagger
		\big|
		-\lambda{\rm tr}(\mathbf{Q}_k),\\
		&f_k^{cvx}(\mathbf{Q}_k,\mathbf{Q}_{\bar{k}})=
		-w_k\log{|
			\mathbf{I}+({
				\mathbf{I}+\mathbf{G}
				\sum_{j=k+1}^{K}\mathbf{Q}_j\mathbf{G}^\dagger
			})^{-1}\mathbf{G}\mathbf{Q}_k\mathbf{G}^\dagger
			|}
		+\sum_{j=1}^{k-1}w_j\log\frac{\left|
			\mathbf{I}+\mathbf{H}_j
			\sum_{i=j}^{K}\mathbf{Q}_i\mathbf{H}_j^\dagger
			\right|}{\left|
			\mathbf{I}+\mathbf{H}_j
			\sum_{i=j+1}^{K}\mathbf{Q}_i\mathbf{H}_j^\dagger
			\right|}\notag \\ 
		&\quad\quad\quad\quad\quad-\sum_{j=1}^{k-1}w_j\log\big|
		\mathbf{I}+\mathbf{G}
		\sum_{i=j}^{K}\mathbf{Q}_i\mathbf{G}^\dagger
		\big|
		+\sum_{j=k+1}^{K}w_jR_j-\lambda{\rm tr}(\sum_{j=1}^{K}\mathbf{Q}_j-\mathbf{Q}_k- P).
		\end{align} \label{kreceiverccvcvx1} 
	\end{subequations}
	\setcounter{equation}{\value{MYtempeqncnt}}
	\setcounter{equation}{12}
	\hrulefill
\end{figure*}

To reach the border of the secrecy capacity region, we need
to solve the WSR problem in \eqref{eq_WSR}. Unfortunately, this problem
is non-convex and challenging to solve–even after finding the
optimal encoding order. In this section, we propose a numerical algorithm to solve \eqref{eq_WSR}. For simplicity of derivations and
notation, without loss of generality, we assume the encoding
order is ascending order, that is $\pi_k=k$. The Lagrangian of
the problem \eqref{eq_WSR} is
\begin{align} \label{lagr}
L(\mathbf{Q}_1, \dots, \mathbf{Q}_K, \lambda) =
\sum_{k=1}^K w_k R_{k}-\lambda(\sum_{k=1}^K{\rm tr}(\mathbf{Q}_k)-P),
\end{align}
where $\lambda$ is the Lagrange multiplier related to the total power constraint.

To solve this, similar to \cite{park2015weighted} and \cite{qi2021signaling} we can then apply the block
coordinate descent (BCD) algorithm. The BCD algorithm  periodically 
finds optimal solutions for a single block of variables while maintaining other blocks of variables fixed at each iteration  \cite{razaviyaynunified}.  {An example of BCD  is the block successive maximization method (BSMM) \cite{park2015weighted} which} updates the covariance matrices by successively optimizing a lower bound of local approximation  of
$f(\mathbf{Q}_1, \dots, \mathbf{Q}_K) = L(\mathbf{Q}_1, \dots, \mathbf{Q}_K, \lambda)$. At  iteration $i$ of the algorithm, the variables $\mathbf{Q}^{(i)}_k$, $k = 1, 2,  \dots, K$, are updated by solving the following problem \cite{razaviyaynunified, park2015weighted}
\begin{align} \label{bsm1}
\mathbf{Q}^{(i)}_k = {\rm arg} \max_{\mathbf{Q}_k \succcurlyeq 0} f(\mathbf{Q}^{(i)}_1, \dots, \mathbf{Q}^{(i)}_{k-1},	\mathbf{Q}_k, \mathbf{Q}^{(i-1)}_{k+1}, \dots,  \mathbf{Q}^{(i-1)}_K),
\end{align}
The function
$f(\mathbf{Q}_1, \dots, \mathbf{Q}_K)$ can be written into the summation of  convex and concave functions
\begin{align} \label{conv_conca}
f(\mathbf{Q}_1, \dots, \mathbf{Q}_K) =f^{ccv}_{k}(\mathbf{Q}_k, \mathbf{Q}_{\bar{k}}) + f^{cvx}_{k}(\mathbf{Q}_k, \mathbf{Q}_{\bar{k}}),
\end{align}
\noindent {in which $\mathbf{Q}_{\bar{k}} = (\mathbf{Q}_1,  \dots, \mathbf{Q}_{k-1}, \mathbf{Q}_{k+1},  \dots, \mathbf{Q}_{K})$ represents all covariance matrices excluding $\mathbf{Q}_k$,}  $f^{ccv}_{k}(\mathbf{Q}_k, \mathbf{Q}_{\bar{k}})$ is the concave function of $\mathbf{Q}_k$ by fixing $\mathbf{Q}_{\bar{k}}$, and $f^{cvx}_{k}(\mathbf{Q}_k, \mathbf{Q}_{\bar{k}})$ denotes the convex function of $\mathbf{Q}_k$, 
The details of the functions are omitted {for space limitations.} We write the functions in \eqref{kreceiverccvcvx1} shown at the bottom of the page. 

For the $i$th iteration, the convex function for $f^{cvx}_{k}(\mathbf{Q}_k, \mathbf{Q}^{(i)}_{\bar{k}})$ can be lower-bounded by its gradient \cite{razaviyaynunified}
\begin{align} \label{Talyor}
\setcounter{equation}{16}
f^{cvx}_{k}(\mathbf{Q}_k, \mathbf{Q}^{(i)}_{\bar{k}}) \geq& 	f^{cvx}_{k}(\mathbf{Q}^{(i)}_k, \mathbf{Q}^{(i)}_{\bar{k}}) + {\rm {tr}}[\mathbf{A}_{k}^{(i)\dag}	(\mathbf{Q}_k - \mathbf{Q}^{(i)}_k)]
\end{align}
in which $\mathbf{Q}^{(i)}_{\bar{k}} =( \mathbf{Q}^{(i)}_1, \dots, \mathbf{Q}^{(i)}_{k-1},	\mathbf{Q}^{(i-1)}_{k+1}, \dots, \mathbf{Q}^{(i-1)}_K)$, the first $k-1$ covariance matrices has been optimized in the $i$ iteration, while the $k+1$ to $K$ covariance matrices are from the previous $i-1$ iteration waiting to be optimized. The power price matrix is a partial derivative with respect to $\mathbf{Q}_k$
\begin{align}
\mathbf{A}_{k}^{(i)} =& \triangledown_{\mathbf{Q}_k} f^{cvx}_{k}(\mathbf{Q}_k, \mathbf{Q}_{\bar{k}})|_{\mathbf{Q}^{(i)}_{k}, \mathbf{Q}^{(i)}_{\bar{k}}}.   \label{eq:A1}
\end{align}
A lower bound of the function $f(\mathbf{Q}_1, \dots, \mathbf{Q}_K)$  at $i$th iteration by substituting the right terms in \eqref{Talyor} into \eqref{conv_conca} can be re-written as
\begin{align} \label{reformuP21}
f(\mathbf{Q}^{(i)}_1,\dots, \mathbf{Q}^{(i)}_{k-1},	\mathbf{Q}_k,  \mathbf{Q}^{(i-1)}_{k+1}, \dots, \mathbf{Q}^{(i-1)}_K)  &\geq	 \notag \\
f^{ccv}_{k}(\mathbf{Q}_k, \mathbf{Q}^{(i)}_{\bar{k}})+
f^{cvx}_{k}(\mathbf{Q}^{(i)}_k, \mathbf{Q}^{(i)}_{\bar{k}})
+ {\rm {tr}}[\mathbf{A}_{k}^{(i)\dag}&	(\mathbf{Q}_k - \mathbf{Q}^{(i)}_k)].
\end{align} 
Then, we optimize the right-hand side of the inequality in \eqref{reformuP21} by omitting the constant terms and obtaining the general iteration formula. The general iteration formula is a convex problem and any convex tool can be applied. It is worth mentioning that the solution of maximizing the right-hand side of \eqref{reformuP21} is unilaterally optimal following Nash equilibrium \cite[Proposition~2]{scutari2013decomposition}.  
The details of the algorithm and its complexity analysis are omitted, interested readers can refer to \cite{park2015weighted, qi2021signaling}, for examples.


\section{Simulations}\label{sec:V}
In this section, we present some simulation results through two examples. 

\textbf{Example 1:} We consider the two-user case ($K=2$) and set the channels and power the same as \cite{park2016secrecy}, which are 
\begin{align}
\mathbf{H}_1& =\left[
\begin{matrix} 
1 &  -0.5 \\
0.5 &  2
\end{matrix}\right], \;
\mathbf{H}_2  =  \left[
\begin{matrix} 
-0.3 &  1 \\
2.0 &  -0.4
\end{matrix}\right], \notag \\
\mathbf{G}& =\left[
\begin{matrix}  0.8 & -1.6
\end{matrix}\right],  \; P=1. \notag
\end{align}
We set $w_1 = w_2=0.5$ to verify the subcase of the weighted sum rate, i.e.,  sum-rate.  If user 1 is encoded first and then user 2,  i.e., $\pi_1 = 1$, $\pi_2 = 2$, 
the optimal secrecy rates are $(R_1, R_2) = (0.8334,  0.7643)$  $\rm{nats/sec/Hz}$ and the secrecy sum-rate is $1.5977$.  If user~2 is encoded first and then user~1, i.e., $\pi_1 = 2$, $\pi_2 = 1$, the optimal secrecy rates are $(R_1, R_2) = (0.5324,  1.065)$  which yields the same security sum-rate which is $1.5977$. 
If we set weights  $w_1$ and $w_2$ from 0 to 1 with a step 0.01, the  results of the rate pairs are shown in Fig.~\ref{fig:dpc1}. The blue circle curve denotes the secure broadcasting with ascending coding order, while the yellow circle curve denotes the descending coding order. Different encoding orders result in different rates. The purple and green curves represent the achievable rate region of MIMO BC \cite{weingartens2006capacity}, i.e., no eavesdropper or $\mathbf{G} = \mathbf{0}$. 

The ordering of the users weight  in the WSR
maximization problem determines the optimal encoding order.  An example is shown in Fig.~\ref{fig:order}. If $w_1 > w_2$, the optimal encoding order is to encode 
 message $M_1$ first and message $M_2$ next. If $w_1 \leq w_2$, the optimal encoding order is to encode $M_2$  first and  $M_1$ is encoded next.  If $w_1 = w_2$, the two orders both will give the
 same sum-rate but each order will give a different corner point
 of the capacity region. See Fig.~\ref{fig:dpc1} and Fig.~\ref{fig:order} for illustrative
 representations. 

\begin{figure}[t]
	\centering
	\begin{minipage}[t]{0.32\textwidth}
		\centering
		\includegraphics[width=6.50cm]{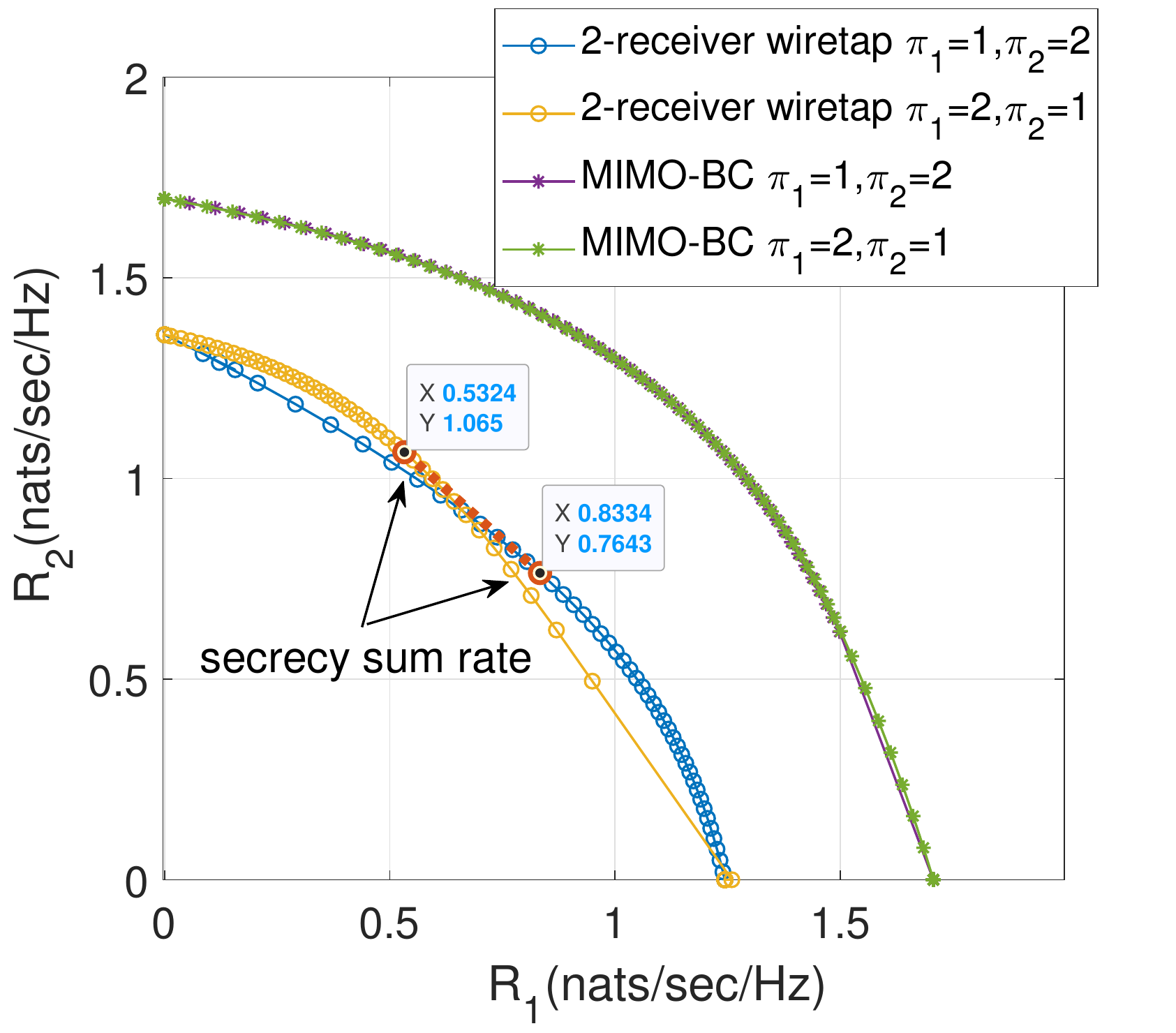}
		\caption{Secrecy rate region for $K=2$.}
		\label{fig:dpc1}
	\end{minipage} 
	\begin{minipage}[t]{0.32\textwidth}
		\centering
		\includegraphics[width=6.50cm]{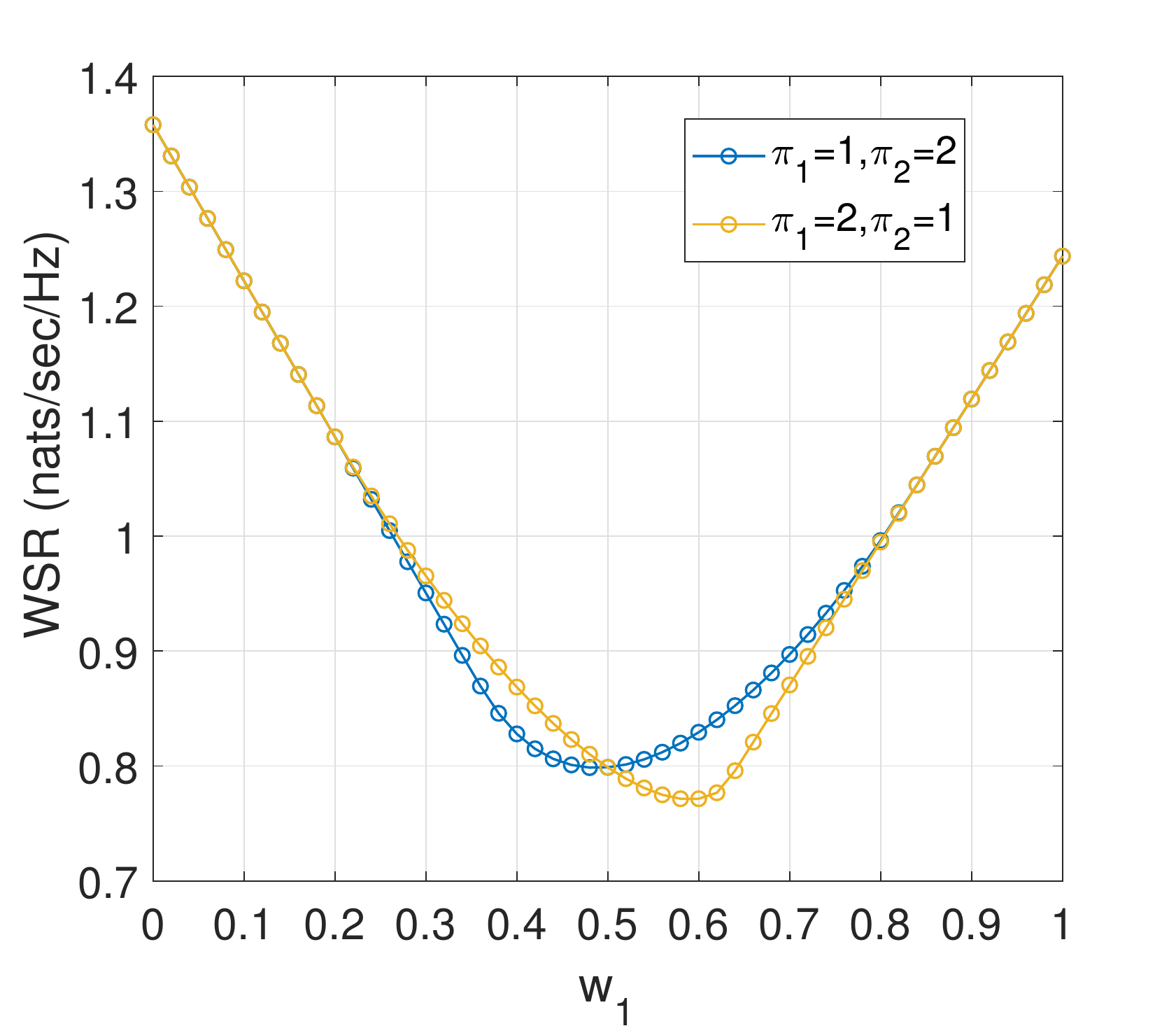}
		\caption{Weights versus WSR for $K=2$. }
		\label{fig:order}
	\end{minipage} 
	\begin{minipage}[t]{0.45\textwidth}
		\centering
		\includegraphics[width=8.0cm]{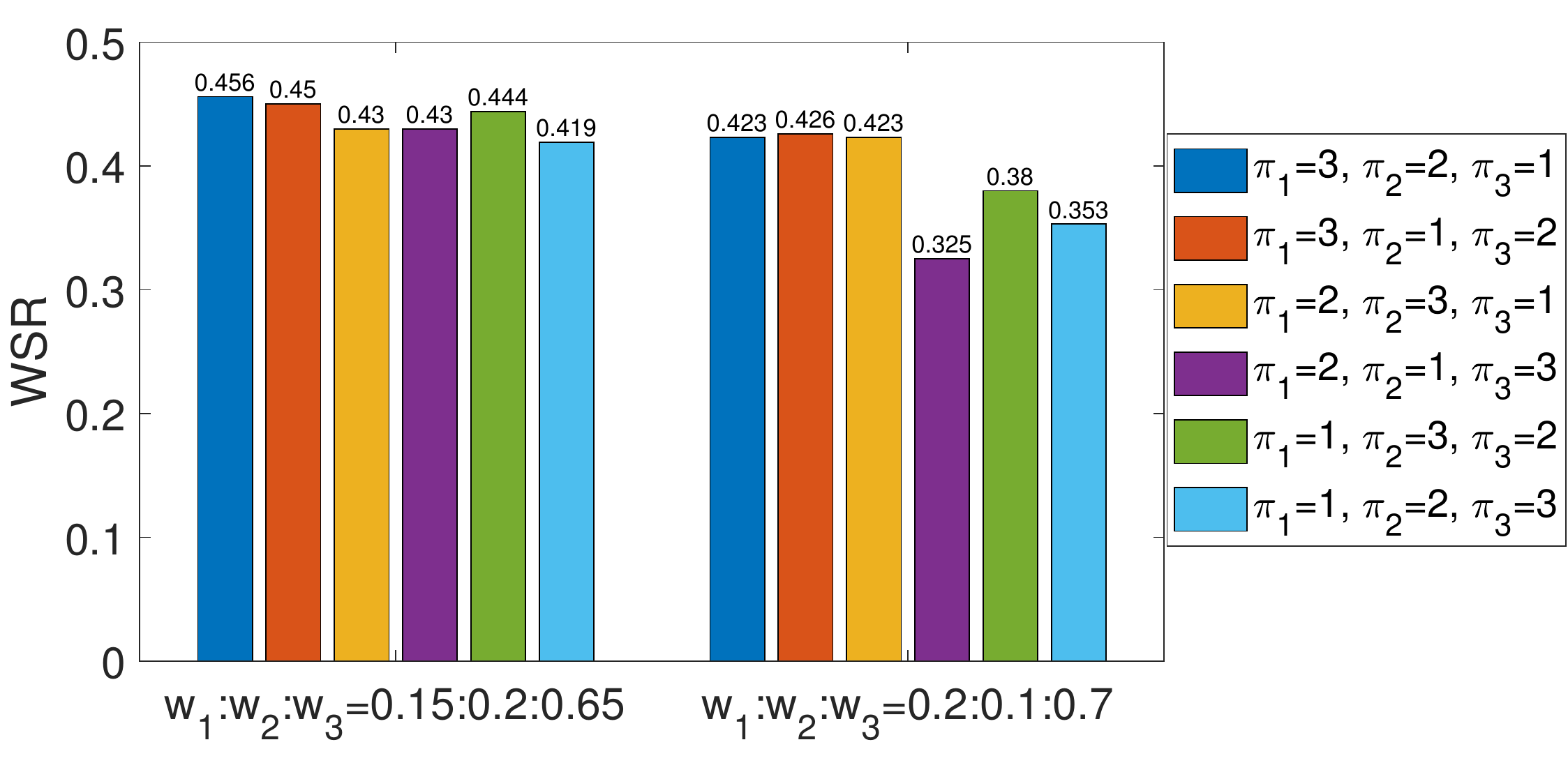}
		\caption{Different weights with all coding order permutations.}
		\label{fig:dpc4}
	\end{minipage} 
\end{figure} 



\textbf{Example 2:} We consider $K=3$, in which the channels are 
\begin{align} \label{example2}
\mathbf{H}_1& =\left[
\begin{matrix} 
-0.4332 + 0.7954i &  -0.3152 - 1.8835i \\
-1.0443 + 1.2282i &  -0.2614 + 0.2198i
\end{matrix}\right], \notag \\
\mathbf{H}_2&  =  \left[
\begin{matrix} 
   1.3389 - 0.5995i &  -0.6924 - 0.4542i \\
-1.2542 + 0.1338i &  -2.1644 + 0.6520i
\end{matrix}\right], \notag \\
\mathbf{H}_3&  =  \left[
\begin{matrix} 
 1.0291 - 0.0212i &    -0.3016 - 0.3662i \\
 0.1646 + 0.5179i &    0.3075 + 0.2919i
\end{matrix}\right], \notag \\
\mathbf{G}& = \left[
\begin{matrix} 
  -0.0875 - 0.9443i &   -0.4637 + 0.7799i
\end{matrix}\right]. \notag
\end{align}
\noindent 
Figure~\ref{fig:dpc4} shows two different weights allocations with all six encoding orders.
Each bar reflects the  secure WSR under a specific weighting coefficient. The maximum WSR is achieved with the encoding order $w_{\pi_1} > w_{\pi_2} >   \dots > w_{\pi_K}$. For example, for $(w_1, w_2, w_3)=(0.15, 0.2, 0.65)$ we get $\pi_1 = 3$, $\pi_2 = 2$, $\pi_3 = 1$, whereas for  $(w_1, w_2, w_3)=(0.2, 0.1, 0.7)$ we have $\pi_1 = 3$, $\pi_2 = 1$, $\pi_3 = 2$.  That is, the optimal encoding order is $\pi = [3, 2, 1]$ in the former and
$\pi = [3, 1, 2]$ in the latter. 

\section{Conclusions}\label{sec:VI}
Optimal encoding order for $K$-receiver  wiretap channel is established in this paper. The proof indicates that the optimal order depends on the
weight ordered. A receiver having a higher weight should be encoded earlier. The proof is characterized by translating
the original WSR maximization problem of the MIMO-BC
wiretap channel into an equivalent problem. This finding reduces the complexity of determining the
capacity region from $K!$ encoding orders to only one-time
encoding. We  have also solved the WSR maximization problem using BSMM. Numerical results verify the optimal encoding order.

\balance
\bibliography{proposal20220117}
\bibliographystyle{ieeetr}

\end{document}